\newcommand{\probP}{\text{I\kern-0.10em P}}
\newtheorem*{theorem*}{Theorem}
\newtheorem{corollary}{Corollary}
\newtheorem{assumption}{Assumption}
\newtheorem{definition}{Definition}
\newtheorem{lemma}{Lemma}
\newtheorem*{problem*}{Problem}
\DeclareMathOperator*{\argmax}{arg\,max}
\DeclareMathOperator*{\argmin}{arg\,min}
\DeclareMathOperator*{\match}{\leftrightarrow}
\DeclareMathOperator*{\sforall}{\text{ }\forall}
\def\algbackskip{\hskip-\ALG@thistlm}
\title{Robust Scenario Interpretation from Multi-model Prediction Efforts}
\author{Yuanhao Lu}
\affiliation{University of Southern California\\Los Angeles
\country{USA}}
\email{terryl@usc.edu}
\author{Ajitesh Srivastava}
\affiliation{University of Southern California\\Los Angeles
\country{USA}}
\email{ajiteshs@usc.edu}
\date{August, 2022}
\begin{document}

\begin{abstract}
    Multi-model prediction efforts in infectious disease modeling and climate modeling involve multiple teams independently producing projections under various scenarios. Often these scenarios are produced by the presence and absence of a decision in the future, e.g., no vaccinations (scenario A) vs vaccinations (scenario B) available in the future. The models submit probabilistic projections for each of the scenarios. Obtaining a confidence interval on the impact of the decision (e.g., number of deaths averted) is important for decision making. However, obtaining tight bounds only from the probabilistic projections for the individual scenarios is difficult, as the joint probability is not known. Further, the models may not be able to generate the joint probability distribution due to various reasons including the need to rewrite simulations, and storage and transfer requirements.
    Without asking the submitting models for additional work, we aim to estimate a non-trivial bound on the outcomes due to the decision variable. We first prove, under a key assumption, that an $\alpha-$confidence interval on the difference of scenario predictions can be obtained given only the quantiles of the predictions. Then we show how to estimate a confidence interval after relaxing that assumption. We use our approach to estimate confidence intervals on reduction in cases, deaths, and hospitalizations due to vaccinations based on model submissions to the US Scenario Modeling Hub.
\end{abstract}

\maketitle
\section{Introduction}

To leverage the wisdom of multiple experts in predictions, various fields employ the approach of coordinating multiple teams who independently submit their projections. Such multi-model prediction efforts are common in infectious disease modeling~\cite{borchering2021modeling,FluSight,sharma_dev_mangla_wadhwa_mohanty_kakkar_2021} and climate modeling~\cite{tebaldi2007use,tegegne_melesse_2020,najafi_robertson_massah_2021}. Often, multiple projections are performed under various ``scenarios'' produced by the presence and absence of a decision in the future, e.g., no vaccinations (scenario A) vs vaccinations (scenario B) available in the future. For example, the US/CDC COVID-19 Scenario Modeling Hub coordinates the task of long-term public health impacts under different scenarios \cite{covid_scenario_hub}. The participating models \cite{sikjalpha,bucky,CovidSP,mobs} in this effort predict COVID-19 cases, hospitalization, and deaths for a scenario in weeks ahead as a random variable and output 23 quantiles for each prediction. 

The goal of generating scenario projections based on a future decision is to assess the impact of that decision. It is crucial to identify confidence intervals on the impact. However, this is difficult mainly because the two sets of scenario projections $A$ and $B$ are independently generated -- the joint distribution of any given outcome $X$(e.g., number of deaths on a certain date) under the two scenarios is not known to enable computing $P(X|B - X|A)$. Also, the distribution is available as a set of quantiles rather than a continuous cumulative distribution function (CDF). Asking the models to generate a joint distribution of outcomes adds to the challenges. First, the modeling teams already spend a significant amount of time on modeling and projections. Additional work of computing the joint distribution, every time a new decision is to be evaluated, may create barriers to joining the multi-model effort and releasing timely projections. Further, submitting joint distributions $P(X|A, X|B)$ will quadratically increase the space complexity. Currently, the submissions in US Scenario Modeling Hub with only marginal distributions $P(X|A)$ for four scenarios for 50 states and 52 week-long projections and three targets (cases, deaths, and hospitalizations) result in a file size over 100MB.

Our goal is to identify non-trivial bounds on the difference of outcomes under two scenarios without asking for any additional work from the participating modeling teams. We assume that we have quantiles for targets for at least two scenarios of which the difference is to be computed. We make some realistic assumptions supported by observed data to derive an arbitrary $\alpha$-confidence interval for the difference of two scenarios.
We start with a strong assumption (Assumption~\ref{Assumption1}) on the models that result in what we define as \textit{zero-violation models}. Under this assumption, we devise a method to find arbitrary $\alpha$-confidence. We then relax assumption 1 and define \textit{$\epsilon-$violation as model} as a \textit{model} that partially satisfies Assumption~\ref{Assumption1} and claim any arbitrary $\alpha$-confidence interval on scenario difference can be still obtained if the \textit{$\epsilon$-violation model} follows Assumption~\ref{Assumption2}. We demonstrate indications that the assumption is reasonable through experiments on models submitted to the US Scenario Modeling Hub.

\section{Methodology}
In this section, we describe a methodology to bound the difference in scenarios for any arbitrary confidence interval $\alpha$ under some assumptions for the models. It should be noted that methods proposed in this section apply to different scenario modeling problems where Assumption \ref{Assumption1} holds. In this paper, we focus on the context of COVID-19 multi-model predictions.

\subsection{Problem Setting}
Define a \textit{scenario} $S_{t_{\text{app}}} \in \mathcal{S}$ to be an environment in which some events occurred at time $t_{\text{app}}$. For instance, a scenario could be recommending vaccination for children at 2 weeks ahead of the time of prediction, so $t_{\text{app}} = 2 + t_0$, where $t_0$ is the time when the prediction is made. Notice that it is possible that $t_0 \neq t_{\text{app}}$, \textit{i.e} events that distinguish scenarios need not immediately occur at the time of prediction. Within each scenario, each unique set of latent variables $P_0, P_1, ..., P_n$ and the time being predicted $t$ is associated with a separate result (COVID-19 case, hospitalization, deaths). For instance, if we are predicting COVID-19 cases, latent variables $P_0, P_1, ..., P_n$ could be temperature, human mobility, percentage of masked population, \textit{etc}. 

A \textit{model} in the multi-model effort takes in a future time $t \geq t_0$ and a \textit{scenario} $S_{t_\text{app}}$ as inputs, and outputs a stochastic prediction of the COVID-19 cases, hospitalizations, and deaths associated with the scenario. We can interpret the \textit{model} that predicts the (COVID cases, deaths, or hospitalizations) for 2 scenarios as a random vector constructed by repeatedly uniformly sampling form the at most countably infinite universe $U(t) = \{(x_i,y_j)\}_N$, where $U$ is dependent the time $t$ that we are predicting for. Each $x_i,y_j \in \mathbb{R}$ is one possible prediction for the two scenarios being compared and is affected by the latent variables $P_1,P_2,...,P_n$ as indicated in figure \ref{fig:tree}. For \textit{models} capable of predicting more than 2 scenarios, interpret them as uniformly sampling the random vectors of length $|\mathcal{S}|$, where $|\mathcal{S}|$ is the number of possible scenarios. Then, for each vector, truncate it by selecting only elements $x_i$ and $y_i$ to form the 2-vector $(x_i,y_j)$, corresponding to the two scenarios of interest. This interpretation of \textit{models} is equivalent to a Monte-Carlo simulation with the underlying latent variables. Figure \ref{fig:tree} provides a detailed illustration of the interpretation of a \textit{model}.

Now, label $x_i$ and $y_j$ such that $x_0 \leq x_1, ..., \leq x_N$ and $y_0 \leq y_1, ..., \leq y_N$, and let $X$, $Y$ be the respective random variables obtained by sampling $x_i$ and $y_j$ separately and independently from $U = \{(x_i,y_j)\}_N$. In our multi-model effort, the \textit{model} only provides information for $X$ and $Y$ in the form of 23 distinctive quantiles, denote them as $\mathbf{Q}^{X} = \{Q^{X}_{q_1}, Q^{X}_{q_2}, ..., Q^{X}_{q_{23}}\}$ and $\mathbf{Q}^{Y} = \{Q^{Y}_{q_1}, Q^{Y}_{q_2}, ..., Q^{Y}_{q_{23}}\}$ where $\{q_1, q_2, ..., q_{23}\} = \mathbf{q}$. Also, let $F_X(k)$ and $F_Y(k)$ denote the cumulative distribution function of random variable $X$ and $Y$ evaluate at $k$, respectively. 
\begin{definition}
The tuple $(x_i,y_j)$ is a matching if and only if $(x_i,y_j) \in U(t)$ for some $t$. Denote this matching by $x_i \match y_j$.
\end{definition}

At the right of the figure \ref{fig:tree}, the elements $x_i$ and $y_j$ with the same color represent a \textit{matching} pair $(x_i,y_j)$. Observe that $x_i$ and $y_j$ having the same set of latent variables $(P_1,P_2,...,P_n,t)$ is a necessary but not a sufficient condition for $x_i \match y_j$.  

From this interpretation, our problem of bounding difference of scenarios can be formulated as the following:
\begin{tcolorbox}
\textit{Given $\alpha \in (0,1)$ and $t \in \mathbb{N}$, give an $\alpha$-confidence interval for $Z$, the random variable obtained from uniformly sampling $x_i - y_j$ where $x_i \match y_j \in U(t)$}
\end{tcolorbox}
To do this, we need to extract the \textit{matching} information from the 23 quantiles given. We thereby propose the following assumption on the scenario:

\begin{assumption}\label{Assumption1}
(Monotonic Impact of Latent Variables): Changing any subset of latent variables will impact the cases, hospitalizations, and deaths of all scenarios in the same direction.
\end{assumption}

This assumption also follows the real-life observations of epidemics. Usually, the latent factors include more transmissible variants, lack of medical resources, social mobility, population awareness, etc, combining these factors should give us monotonic impact on the potency of the epidemic. It should be noted that, however, this assumption is strong, and we will discuss the limitation of this assumption and relax it in the part where \textit{violation models} are discussed.

\subsection{Zero-Violation Models}

For any \textit{models} predicting scenarios that satisfy assumption \ref{Assumption1}, we would expect it to demonstrate a similar behavior in its predictions. That is, we can reasonably expect for all chosen pair $(x_i,y_j) \in U(t)$ from the \textit{model}, changing the latent variables $P_0, ..., P_n$ will only impact $x_i$ and $y_j$ in the same direction. In other words, any combination of interventions that could have reduced $x_i$ could not increase $y_i$, prompting the following definition:

\begin{definition}
A zero-violation \textit{model} is a model that satisfies assumption \ref{Assumption1}.
\end{definition}

Due to the stochasticity of some models, the leaf nodes in figure \ref{fig:tree} are not necessarily deterministic but are rather random vectors $(X_i,Y_i)$. Taking account of this type of model and reconciling for scenarios where assumption \ref{Assumption1} does not hold, \textit{violation models} is defined and discussed in later sections. For now, keep in mind that \textit{zero-violation models} have deterministic leaf nodes in figure \ref{fig:tree}.

\begin{lemma}\label{lem:1}
(Well-Orderedness of \textit{zero-violation models}): For $x_i \match y_j$ produced by a \textit{zero-violation model}, we must have $i=j$. That is, the rank of \textit{matching} $x_i$ and $y_j$ must be equal in their respective ordered list. 
\end{lemma}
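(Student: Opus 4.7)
My plan is to recast the \textit{matching} relation on $U(t)$ as a bijection between the sorted $x$-list and the sorted $y$-list and then show it must be monotonically nondecreasing, forcing it to be the identity. Concretely, since $U(t)$ is a set of $N$ (or $N{+}1$) ordered pairs with each $x_i$ and each $y_j$ appearing exactly once, there is a permutation $\pi$ of the index set $\{0,1,\ldots,N\}$ with $x_i \match y_{\pi(i)}$ for every $i$. The lemma reduces to proving $\pi(i) = i$, and for this it suffices to establish that $\pi$ is order-preserving.

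To prove monotonicity of $\pi$, I would fix $i < i'$, so that $x_i \le x_{i'}$ by the sorting convention. Interpreting the \textit{model} as a Monte-Carlo sampler over the latent variables $P_0,\ldots,P_n$ as in the problem setting, the pairs $(x_i, y_{\pi(i)})$ and $(x_{i'}, y_{\pi(i')})$ correspond to two latent-variable configurations, and moving from the first to the second amounts to changing some subset of those latent variables. Assumption~\ref{Assumption1} asserts that any such change impacts the two scenarios in the same direction; since the $X$-component weakly increased from $x_i$ to $x_{i'}$, the $Y$-component must weakly increase as well, so $y_{\pi(i)} \le y_{\pi(i')}$. Because the $y$-values are sorted, this forces $\pi(i) \le \pi(i')$, so $\pi$ is weakly monotone. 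A weakly monotone permutation of a finite totally ordered set is the identity, so $\pi(i) = i$, and therefore $j = i$ whenever $x_i \match y_j$.

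The main obstacle is the conceptual translation between the informal phrasing of Assumption~\ref{Assumption1} (``impact in the same direction'') and the precise statement that matched pairs are co-monotone in their two coordinates; once this translation is granted, what remains is the elementary fact that an order-preserving bijection of a finite chain is the identity. Some care is also warranted around ties: if several $x_i$'s (or $y_j$'s) coincide, the labeling produced by sorting is only determined up to permutation within a tied block, but the conclusion $j=i$ is unaffected because any such labeling can be chosen consistently with $\pi$ being the identity while preserving every realized matching.
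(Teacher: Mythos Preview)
Your proposal is correct and follows essentially the same route as the paper's own proof. The paper argues by contradiction: if some matching had $i>j$, then by a pigeonhole argument there must exist another matching $x_{i'}\match y_{j'}$ with $i'<i$ and $j'>j$, and passing from one latent-variable configuration to the other moves $x$ down while moving $y$ up, contradicting Assumption~\ref{Assumption1}. Your direct argument that the induced permutation is order-preserving (and hence the identity) is simply the contrapositive of this, with the pigeonhole step replaced by the equivalent fact that a monotone bijection of a finite chain is the identity; your explicit handling of ties is a nice addition that the paper leaves implicit.
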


\begin{proof}
The proof can be found in Appendix \ref{lem:1proof}
\end{proof}

Lemma \ref{lem:1} implies that $F_X(X \leq x_i) = F_Y(Y \leq y_i) = \frac{i}{N}$ for any \textit{zero-violation model}. Therefore, to sample $Z = X - Y$, it is sufficient to choose $i$ uniformly at random from $\{1,2,...N\}$ and calculate $x_i - y_i$. In our problem, however, only 23 $x_i$ and $y_i$ are given with their corresponding ranks (provided by the 23 quantiles). To adapt, we propose algorithm \ref{alg:IIMB} to find the set of possible upper and lower bound on $Z$, from which any arbitrary $\alpha$-confidence interval on $Z$ can be obtained.

\begin{algorithm}
\begin{flushleft}
\textbf{Input}: Quantile labels $\mathbf{q}$, quantiles $\mathbf{Q^{X_t}}$ and $\mathbf{Q^{Y_t}}$ \\ 
\textbf{Output}: Upper and lower bound on $Z = x_i - y_j$, $Z^U$ and $Z^L$, at time $t$ 
\end{flushleft}

\caption{Iterative Zero-Violation Model Bound}\label{alg:IIMB}
    \begin{algorithmic}[1]
        \For{$j = \{1,2,...,10^5\}$}
        \State{Sample $\frac{i}{N}$ from Uniform$(0,1)$}
        \State{$q_l \gets \max_{q \in \mathbf{q}}\{q \leq \frac{i}{N}\}$}
        \State{$q_u \gets \min_{q \in \mathbf{q}}\{q \geq \frac{i}{N}\}$}
        \State{Append $\mathbf{Q^X}_{q_u} - \mathbf{Q^Y}_{q_l}$ to $Z^U$}
        \State{Append $\mathbf{Q^X}_{q_l} - \mathbf{Q^Y}_{q_u}$ to $Z^L$}
        \EndFor
    \end{algorithmic}
\end{algorithm}

In Algorithm\ref{alg:IIMB}, since $Q^X_{q_l} \leq x_i \leq Q^X_{q_u}$ and $Q^Y_{q_l} \leq y_i \leq Q^Y_{q_u}$, it is easy to see that 
\begin{align}
    Q^X_{q_l} - Q^Y_{q_u} \leq x_i &- y_i \leq Q^X_{q_u} - Q^Y_{q_l}
    \intertext{that is,}
    Z^L \leq &Z \leq Z^U
\end{align} 

Since $\probP(l\leq Z\leq u) = \probP(Z \leq u) - \probP(Z \leq l)$, and $\probP(Z \leq u) \geq \probP(Z^U \leq u)$, $\probP(Z\leq l) \leq \probP(Z^L \leq l)$, we arrive at the identity $\probP(l \leq Z \leq u) \geq P(Z^{U} \leq u) - \probP(Z^L \leq l)$, which can be applied to find the confidence interval. Namely, an $\alpha$-confidence interval is obtained by finding the appropriate $u$ and $l$ such that $\probP(l \leq Z \leq u) \leq \probP(Z^{U} \leq u) - \probP(Z^L \leq l) = \alpha$. 

In addition, as the number of available quantiles increases uniformly on $[0,1]$, $q_l$ converges to $q_u$ ($q_l \to q_u$). Thus, $Q^X_{q_l} \to Q^X_{q_u}$, $Q^Y_{q_l} \to Q^Y_{q_u}$, and $Z^U - Z^L = Q^X_{q_u} - Q^X_{q_l} + Q^Y_{q_l} - Q^Y_{q_u} \to 0$; that is, $Z^L \to Z^U = Z$ eventually when large number of quantiles become available. 
\begin{figure*}
    \begin{center}
        \includegraphics[width=0.8\linewidth]{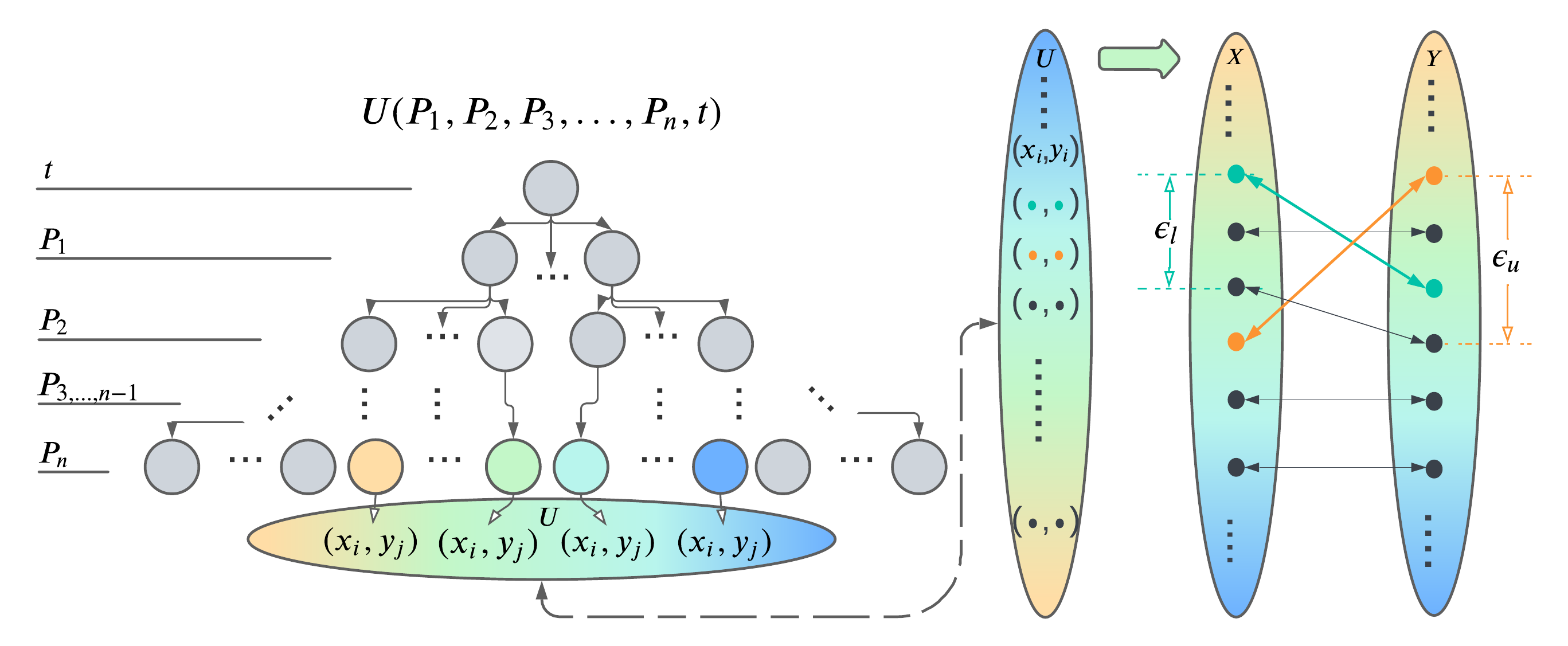}
    \end{center}    
    \caption{Illustration for the interpretation of the \textit{zero-violation model} at time $t$. The root node represents the time $t$ that the \textit{model} is generating predictions for, while every other non-leaf nodes are latent variables $P_i$ affecting predictions. As it might not be feasible for the \textit{model} to capture the precise dynamics of the environment(scenario), $P_i$ in the \textit{model} can be different from the $P_i$ of the scenario. The leaf nodes are vectors in $\mathbb{R}^{|\mathcal{S}|}$, where $|\mathcal{S}|$ is the total number of scenarios the model predicts (in this figure, $|\mathcal{S}| = 2$). The leaf nodes are affected by the variables from all of its ancestor nodes. Each unique path in this tree generates a distinct vector, and the collection of which forms the universe at time $t$. The two lists $X$ and $Y$ depicted at the right of $U$ are the ranked independent  observations of $x_i$ and $y_j$. }
    \label{fig:tree}
\end{figure*}

\subsection{Violation Models}
A \textit{violation model} is a \textit{model} that only partially satisfies assumption \ref{Assumption1}. From Lemma \ref{lem:1}, a \textit{zero-violation model} is expected to have $x_i \match y_j, \sforall i = j$. Equivalently, for \textit{zero-violation models} with scenarios whose $t_{\text{app}} > t_0$, $x_i = y_i$ is expected because $x_i \match y_i$ and the events that differentiates the scenarios has not occurred at time $t \leq t_{\text{app}}$, which entails $Q^X_{q_i} = Q^Y_{q_i} \sforall q_i \in \mathbf{q}$. However, this is not observed in practice for two reasons:
\begin{enumerate}
    \item Some models have stochasticity, and the leaf nodes in figure \ref{fig:tree} becomes random variables $(X_i,Y_j)$ instead of the deterministic vector $(x_i,y_j)$. From repeated sampling $x_i$ and $y_j$ from $X_i$ and $Y_j$, we can not guarantee $x_i = y_i$ for $t \leq t_{\text{app}}.$  
    \item Assumption 1 is violated in the models. This can happen in many ways. For instance, vaccines can be more potent under some specific circumstances. 
\end{enumerate}

In reality, a weaker version of assumption \ref{Assumption1} is more likely to hold: cases, hospitalizations, and deaths are monotonic with changing one latent variable, instead of changing a set of latent variables. That is, each vaccination, human mobility, weather, etc. has a monotonic impact, but the monotonic impact is not guaranteed when the changes are combined. Empirically, however, we later demonstrate that the extent to which the assumption is violated is small.

For \textit{models} that violates assumption \ref{Assumption1} or have stochasticity (random vectors as leaf nodes in figure \ref{fig:tree}), we define them as \textit{violation models}. To quantitatively examine this type of \textit{models}, define \textit{$\epsilon$-violation} as a measure on the degree of violation for \textit{violation models}.

\subsubsection{$\epsilon$-Violation}
A mismatch occurs when $x_i \match y_j$ with $i \neq j$. There are two kinds of mismatch that can happen: either $i > j$ or $i < j$. In other words, $x_i$ is matched to an element that is above its rank in $Y$'s ordered list or vice versa. The right hand side of Figure \ref{fig:tree} illustrates this: depicted in orange, $\epsilon_u$ is the maximum upper mismatch of $x_i \match y_i$, while $\epsilon_l$ is the maximum lower mismatch depicted in green. To bound $x_i - y_j$ for $x_i \match y_j$ of a violation model, we therefore need to take the maximum of $x_i - x_{i + \text{mismatch}}$ and $x_i - x_{i - \text{mismatch}}$. The following definition formally defines mismatch.

\begin{definition}\label{def:ep}
The $\epsilon$-violation of a \textit{violation} model is the largest difference in quantiles for $x_i \match y_j$ in their respective ordered lists, \textit{i.e}
\begin{align}
    \epsilon_u &= \max_{x_i \match y_j,t}\{F_Y(x_i) - F_X(y_j)\}\\
    \epsilon_l &= -\min_{x_i \match y_j,t}\{F_Y(x_i) - F_X(y_j)\}\
\end{align}
where $t$ in max and min represents all predictions over time. 
\end{definition}

Intuitively, the larger the $\epsilon_l$ and $\epsilon_u$, the higher the mismatch, and thus more uncertainty the model's predictions. Also, since $x_i \match y_j \Rightarrow i = j$ for \textit{zero-violation models}, we have $F_X(x_i) = F_X(x_j) = F_Y(y_j)$ for all $x_i \match y_j$ for \textit{zero-violation models}, prompting the following corollary:

\begin{corollary}   
A model has $\epsilon_u = \epsilon_l = 0$ if and only if it is a \textit{zero-violation} model. 
\end{corollary}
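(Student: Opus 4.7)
The plan is to prove both directions of the biconditional, using Lemma~\ref{lem:1} for the forward direction and a contrapositive argument for the reverse direction. For the forward direction I would assume the model is zero-violation and invoke Lemma~\ref{lem:1}: every matching $x_i \match y_j$ in $U(t)$, at every time $t$, satisfies $i = j$. Since $x_i$ is by construction the $i$-th element in the ranked list of $N$ samples from $X$, its rank is $i/N$, and similarly $y_j = y_i$ has rank $i/N$ in $Y$'s list. The rank difference therefore vanishes for every matching at every $t$, so the max and the min in Definition~\ref{def:ep} are both $0$, giving $\epsilon_u = \epsilon_l = 0$.

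For the reverse direction I would argue the contrapositive: if the model is a violation model---either violating Assumption~\ref{Assumption1} or exhibiting stochasticity that yields off-diagonal pairs in $U(t)$---then there exists some time $t$ and some matching $x_i \match y_j$ with $i \neq j$. Such a pair contributes a signed rank difference $(i-j)/N$; if $i > j$ it forces $\epsilon_u > 0$, and if $i < j$ it forces $\epsilon_l > 0$, in either case contradicting $\epsilon_u = \epsilon_l = 0$.

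The main obstacle is that Lemma~\ref{lem:1} only asserts zero-violation $\Rightarrow$ well-ordered matchings, so closing the reverse direction requires the converse: that a fully rank-preserving matching structure forces Assumption~\ref{Assumption1}. I would handle this by unpacking the Appendix~\ref{lem:1proof} argument to show that any perturbation of latent variables that moves $x_i$ and $y_i$ in opposite directions immediately produces a mismatched pair, so ``no mismatches at any $t$'' is equivalent to ``Assumption~\ref{Assumption1} holds''. A smaller bookkeeping point is that Definition~\ref{def:ep} is written with $F_Y(x_i)$ and $F_X(y_j)$, while the surrounding text and calculations use ranks within each variable's own ordered list; I would read the definition in light of the textual phrase ``difference in quantiles in respective ordered lists'' so that the quantity reduces to $(i-j)/N$, which is the only thing the argument above requires.
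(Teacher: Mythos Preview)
Your proposal is correct and matches the paper's approach. The paper gives no formal proof for this corollary at all---only the one-sentence observation immediately preceding it (that Lemma~\ref{lem:1} forces $F_X(x_i)=F_Y(y_j)$ for zero-violation models), which addresses only the forward implication. Your forward direction is exactly this argument; your reverse direction via the contrapositive, together with the converse-of-Lemma~\ref{lem:1} step (an opposite-direction move of $x$ and $y$ under a latent-variable change necessarily creates a rank crossing, hence a mismatch), is the natural completion that the paper simply leaves implicit. Your reading of the $F_Y(x_i)$ versus $F_X(x_i)$ discrepancy in Definition~\ref{def:ep} as the intended rank difference $(i-j)/N$ is also the only interpretation consistent with the surrounding text and with the corollary itself.
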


To bound $\{x_i - y_j:x_i\match y_j\}$ for \textit{violation models} with algorithm \ref{alg:IIMB}, we need to account for the cases where $y_j$ exceeds $Q^Y_{q_u}$ or falls below $Q^Y_{q_l}$, as shown in the right hand side of figure \ref{fig:tree}. Similar adjustments need to be made for $x_i$ as well, prompting the need for wider bound. To do this, instead of sampling $q_l$ and $q_u$ such that $q_l \leq \frac{i}{N} \leq q_u$, sample $q_l$ and $q_u$ to be the tightest quantiles such that $q_l \leq \max\{\frac{i}{N} - \epsilon\ , \min\{\mathbf{q}\}\}$ and $q_u \geq \min\{\frac{i}{N} + \epsilon, \max\{\mathbf{q}\}\}$, where taking the $\max$ and $\min$ with $\mathbf{q}$ enforces $q_u$ and $q_l$ to be within $[\min\{\mathbf{q}\},\max\{\mathbf{q}\}]$, the range of available quantiles. Sampling this way guarantees the $y_j$ which $x_i$ is supposed to be \textit{matched} with lies within $[Q^Y_{q_l}$, $Q^Y_{q_u}]$, and the $x_k$ that $y_j$ is supposed to be \textit{matched} with is in the interval $[Q^X_{q_l},Q^X_{q_u}]$. Then, similar to the \textit{zero-violation models}, we attain the confidence interval by finding the appropriate $u$ and $l$ such that $\probP(l \leq Z \leq u) \geq \probP(Z^{U} \leq u) - \probP(Z^L \leq l)$. Algorithm \ref{alg:ISMB} illustrates the sampling of $Z^U$ and $Z^L$ in detail.

\begin{algorithm}
\begin{flushleft}
\textbf{Input}: Violation parameters $\epsilon_u$ and $\epsilon_l$, quantile labels $\mathbf{q}$, quantiles $\mathbf{Q^X}$ and $\mathbf{Q^Y}$ \\ 
\textbf{Output}: Upper and lower bounds $Z^U$ and $Z^L$ at time $t$ 
\end{flushleft}
\caption{Iterative Violation Model Bound}\label{alg:ISMB}
    \begin{algorithmic}[1]
        \For{$j = \{1,2,...,10^5\}$}
        \State{Sample $\frac{i}{N}$ from Uniform$(0,1)$}
        \State{$q_l \gets \max_{q \in \mathbf{q}}\{q \leq \max\{\frac{i}{N} - \epsilon_l,\min\{\mathbf{q}\}\}\}$}
        \State{$q_u \gets \min_{q \in \mathbf{q}}\{q \geq \min\{\frac{i}{N} + \epsilon_u,\max\{\mathbf{q}\}\}\}$}
        \State{Append $\mathbf{Q^X}_{q_u} - \mathbf{Q^Y}_{q_l}$ to $Z^U_t$}
        \State{Append $\mathbf{Q^X}_{q_l} - \mathbf{Q^Y}_{q_u}$ to $Z^L_t$}
        \EndFor
    \end{algorithmic}
\end{algorithm}
In algorithm \ref{alg:ISMB}, since the CDF is a monotonically increasing function, and $\epsilon_u$ and $\epsilon_l$ is strictly positive,
\begin{equation}
    x_i - F^{-1}_Y(\frac{i}{N} + \epsilon_u) \leq x_i - y_j \leq x_i - F^{-1}_Y(\frac{i}{N} - \epsilon_l)
\end{equation}
by the definition of $\epsilon_u$ and $\epsilon_l$ for $\forall x_i \match y_j$. Then, since $Q^Y_{q_u}$ and $Q^Y_{q_l}$ are sampled such that $Q^Y_{q_u} \geq F^{-1}_Y(\frac{i}{N} + \epsilon_u)$ and $Q^Y_{q_l} \leq F^{-1}_Y(\frac{i}{N} - \epsilon_l)$, we have
\begin{align}
x_i - Q^Y_{q_u} &\leq x_i - F^{-1}_{Y}(\frac{i}{N} + \epsilon_u) \\
&\leq x_i - y_j \\ 
&\leq x_i - F^{-1}_Y(\frac{i}{N} - \epsilon_l) \\
&\leq x_i - Q^Y_{q_l}
\intertext{since $Q^X_{q_l} \leq x_i \leq Q^X_{q_u}$,}
 Q^X_{q_l} - Q^Y_{q_u} \leq x_i &- y_j \leq Q^X_{q_u} - Q^Y_{q_u}
\intertext{that is,}
Z^L \leq &Z \leq Z^U
\end{align}

The task remains to extract $\epsilon$ from the quantiles. Since only 23 quantiles and their respective \textit{matchings} are known, it is not possible to obtain information on $F_X(x_i) - F_Y(y_j)$ for the full distribution. Therefore, $\epsilon$ can be only estimated with the \textit{matchings} $x_i \match y_j$ observable from the quantiles. As the \textit{matchings} are only known  for scenarios with $t_{\text{app}} \geq t_0$ ($Q_i^X$ should be equal to $Q_i^Y$ since scenario has not taken effect. If not, all misalignment are due to stochasticity), only those scenarios are valid for the approximation of $\epsilon$. Now, we propose the concrete method to estimate such $\epsilon$ from scenarios with $t_{\text{app}} \geq t_0$.

\subsubsection{Estimating $\epsilon$ for Violation Models}
Due to the above-mentioned reasons, additional assumptions on the behavior of the violation measure need to be proposed in order to bound the difference in scenarios with $t > t_{\text{app}}$. 

\begin{assumption}\label{Assumption2}
A \textit{violation model} is said to be \textit{well-behaved} if its outputs satisfies both
\begin{align}
    \argmax_{t}\{F_Y(y_j) - F_X(x_i)\} \leq t_{\text{app}}\\
    \intertext{and}
    \argmin_{t}\{F_Y(y_j) - F_X(x_i)\} \leq t_{\text{app}}
\end{align}
for $\forall x_i \match y_j$ \hfill (Non-increasing $\epsilon$ after $t_{\text{app}}$)
\end{assumption}

As discussed in the problem setting, both $F_Y(y_i)$ and $F_X(x_i)$ are sampled from $U(t)$ and is dependent on $t$. Here, with a slight abuse of notation, we take $\argmax$ and $\argmin$ on $t$ to restrict the behavior of the violation across time. Intuitively, assumption \ref{Assumption2} restricts the time at which maximum violation occurs to be before $t_{\text{app}}$, so that the maximum violation is observable. In effect, this assumption enforces the upper-bound estimated on violation for $t \leq t_{\text{app}}$ is still an upper bound for $t > t_{\text{app}}$.

\begin{definition}
Without loss of generality, let $X$ be the scenario such that $Q^X_{q_i} \geq Q^Y_{q_i}$, depending on $i$. Define the estimated upper and lower violation $\tilde{\epsilon_l}$ and $\tilde{\epsilon_u}$ for a \textit{violation model} as the following:

\begin{equation}
    \tilde{\epsilon}_l = \max_{i}\{q_i - q_\alpha\}
    \text{, where }
    \alpha = \max_{k\leq i-1}{\{Q^X_{q_k} \ge Q^Y_{q_{i-1}}\}}
\end{equation}
and
\begin{equation}
    \tilde{\epsilon}_u = -\min_{i}\{q_i - q_\beta\}
    \text{, where }
    \beta = \min_{k\geq i+1}{\{Q^Y_{q_k} \ge Q^X_{q_{i+1}}\}} 
\end{equation}
for all $Q^X$ and $Q^Y$ corresponding to $t < t_{\text{app}}$
\end{definition}

\begin{lemma}\label{lem:2}
For the estimated upper and lower violation $\epsilon_u$ and $\epsilon_l$ and the approximated violation $\tilde{\epsilon}_u$ and $\tilde{\epsilon}_l$, we have
\begin{equation}
    \epsilon_u \leq \tilde{\epsilon}_u \text{ and } \epsilon_l \leq \tilde{\epsilon}_l
\end{equation}
\end{lemma}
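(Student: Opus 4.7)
The plan is to split the proof into two steps, matching the two hypotheses of Assumption~\ref{Assumption2}. First, I would apply Assumption~\ref{Assumption2} to reduce the $\max$ and $\min$ in Definition~\ref{def:ep} from all times $t$ to times $t \leq t_{\text{app}}$. Since the $\argmax$ and $\argmin$ of $F_Y(y_j) - F_X(x_i)$ over $t$ both lie in $\{t \leq t_{\text{app}}\}$ by assumption, both $\epsilon_u$ and $\epsilon_l$ are already attained at some $t^\star \leq t_{\text{app}}$. This reduction is essential because the matchings $x_i \match y_j$ are only observable through the quantiles $Q^X, Q^Y$ submitted by the model, and these are directly informative about matchings only when the two scenarios have not yet differentiated, i.e., when $t \leq t_{\text{app}}$, which is exactly the regime on which $\tilde{\epsilon}$ is computed.

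Second, I would prove the pointwise bound $\epsilon_l \leq \tilde{\epsilon}_l$ by analyzing the quantile data at $t^\star$ (the proof of $\epsilon_u \leq \tilde{\epsilon}_u$ is entirely symmetric). Let $x_i \match y_j$ be the matching that realizes the minimum defining $\epsilon_l$. Using the WLOG orientation $Q^X_{q_i} \geq Q^Y_{q_i}$ imposed just before the definition of $\tilde{\epsilon}$, the monotonicity of $F_X$ and $F_Y$, and the defining property that $\alpha$ is the largest $k \leq i-1$ with $Q^X_{q_k} \geq Q^Y_{q_{i-1}}$, I would derive the two inequalities
$$F_X(y_j) \leq q_\alpha, \qquad F_Y(x_i) \geq q_i,$$
from which $F_X(y_j) - F_Y(x_i) \leq q_\alpha - q_i$, i.e., $-(F_Y(x_i) - F_X(y_j)) \leq q_i - q_\alpha \leq \tilde{\epsilon}_l$. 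The mirror construction for $\tilde{\epsilon}_u$ uses $\beta = \min\{k \geq i+1 : Q^Y_{q_k} \geq Q^X_{q_{i+1}}\}$ in place of $\alpha$.

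The main technical obstacle will be rigorously handling the quantile grid's discretization. Because $\mathbf{q}$ consists of only 23 labels, the auxiliary indices $\alpha$ and $\beta$ are forced to round outward to the nearest available grid point, and the argument must verify that this rounding is always in the conservative direction -- i.e., that grid coarseness can only inflate $\tilde{\epsilon}$, never deflate it. A secondary subtlety is the per-index ``without loss of generality'' choice of $X$ versus $Y$: because that choice can flip with $i$, one has to check that both $\tilde{\epsilon}_l$ and $\tilde{\epsilon}_u$ remain valid bounds on the correct side of the mismatch under that swap, so that the single-sided sandwich above covers every matched pair that could realize the extrema in Definition~\ref{def:ep}.
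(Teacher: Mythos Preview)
Your proposal is correct and follows essentially the same route as the paper: first invoke Assumption~\ref{Assumption2} to restrict the extremum in Definition~\ref{def:ep} to $t \leq t_{\text{app}}$, then sandwich the true CDF gap between adjacent grid quantiles using monotonicity of $F_X,F_Y$ and the outward rounding built into the definitions of $\alpha,\beta$. One point worth making fully explicit in your second step: the paper's argument relies on the identity $x_i = y_j$ for matched pairs when $t \leq t_{\text{app}}$ (the scenarios have not yet diverged, so the two coordinates of a leaf coincide), and this is precisely what lets one pass from $F_Y(x_i)$ to a $Y$-quantile and then compare against a $Q^X$ grid point; you allude to this when you say the quantiles are ``directly informative about matchings'' before $t_{\text{app}}$, but your claimed inequality $F_Y(x_i) \geq q_i$ will not go through without stating and using that identification.
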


\begin{proof}
The proof can be found in Appendix \ref{lem:2proof} 
\end{proof}

The implementation detail of estimating $\tilde{\epsilon}_u$ and $\tilde{\epsilon}_l$ is given in algorithm \ref{alg:estimatingeuel}. The estimated $\epsilon$ obtained can be then plugged into algorithm \ref{alg:IIMB} to evaluate the $\alpha$-confidence interval.

\begin{algorithm}
\caption{Estimating $\epsilon_l$ and $\epsilon_u$}\label{alg:estimatingeuel}
\begin{flushleft}
\textbf{Input:} Quantile labels $\mathbf{q}$, quantiles $\mathbf{Q^{X_t}}$ and $\mathbf{Q^{Y_t}}$, $t_{\text{app}}$\\
\textbf{Output}: $\tilde{\epsilon}_u$ and $\tilde{\epsilon}_l$
\end{flushleft}

\begin{algorithmic}[1]
    \State {$\mathbf{\epsilon}_u \gets \text{[ ]}$}
    \State {$\mathbf{\epsilon}_l \gets \text{[ ]}$}
    \For{$t = 1,2,...,t_{app}$}
        \For{$i = 1,2,...,|\mathbf{q}|$}
            \If{$\mathbf{Q^{X_t}}[i] \geq \mathbf{Q^{Y_t}}[i]$}
                \State{$\mathbf{Q^{U_t}} \gets \mathbf{Q^{X_t}}$}
                \State{$\mathbf{Q^{L_t}} \gets \mathbf{Q^{Y_t}}$}
            \Else
                \State{$\mathbf{Q^{U_t}} \gets \mathbf{Q^{Y_t}}$}
                \State{$\mathbf{Q^{L_t}} \gets \mathbf{Q^{X_t}}$}            
            \EndIf
            \State{$\alpha \gets \max_{k\leq i}\{\mathbf{Q^{U_t}}[k] \geq \mathbf{Q^{L_t}}[i-1]\}$}
            \State{$\beta \gets \min_{k \geq i}\{\mathbf{Q^{L_t}}[k] \geq \mathbf{Q^{U_t}}[i+1]\}$}
            \State{Append $\mathbf{q}[i] - \mathbf{q}[\alpha]$ to $\mathbf{\epsilon}_u$}
            \State{Append $\mathbf{q}[\beta] - \mathbf{q}[i]$ to $\mathbf{\epsilon}_l$}
        \EndFor
    \EndFor
    \State{$\tilde{\epsilon}_l \gets \max\{\mathbf{\epsilon}_l\}$}
    \State{$\tilde{\epsilon}_u \gets \max\{\mathbf{\epsilon}_u\}$}
\end{algorithmic}
\end{algorithm}

\subsubsection{Approximating $\tilde{\epsilon}$}
As shown in \ref{lem:2}, both $\tilde{\epsilon}_u$ and $\tilde{\epsilon}_l$ are guaranteed to be over-estimations for $\epsilon_u$ and $\epsilon_l$.

In the scenarios where the $x_i$ and $y_j$ are not clustered around certain values, their cumulative distribution functions should be relatively "smooth". As the estimated $\epsilon$ is always maximized in the worst-case scenario (all the values are clustered so that the CDF looks stair-like), it is reasonable to approximate the estimation of $\epsilon$ to obtain a smaller uncertainty bound. Naturally, the CDF of $X$ and $Y$ can be interpolated to obtain more quantiles. The monotonicity of Cubic Hermite Interpolating Polynomial (PCHIP) makes it a suitable candidate for interpolating CDF. As stated earlier, the smaller the $\epsilon$, the closer $Z^{L}$ is to $Z^{U}$ in algorithm \ref{alg:IIMB}, resulting in a tighter confidence interval.

\begin{definition}
Depending on the choice of $i$, let $X$ be the scenario associated with a higher quantile $(Q^X(i) > Q^Y(i))$. let $\textsc{PCHIP}_X(x_i):X \to q \in [0,1]$ be the interpolated CDF of scenario X (i.e $F_X$). Define the approximated $\tilde{\epsilon}_l$ and $\tilde{\epsilon}_u$ as
\begin{align}
    \varepsilon_u &= \max_{q}\{\text{PCHIP}_Y(x) - \text{PCHIP}_X(x)\}\\
    \varepsilon_l &= -\min_{q}\{\text{PCHIP}_Y(x) - \text{PCHIP}_X(x)\}
\end{align}
for $\forall x \in [\min\{\mathbf{Q^Y},\mathbf{Q^X}\},\max\{\mathbf{Q^X},\mathbf{Q^Y}\}]$.
\end{definition}

As more quantiles become available, the PCHIP interpolation in \ref{alg:approximateep} becomes a more accurate approximation of the inverse CDF. Since PCHIP is always a refined approximation of the quantiles and combining lemma \ref{lem:2}, the relations $\varepsilon_u \leq \tilde{\epsilon}_u \leq \epsilon_u$ and $\varepsilon_l \leq \tilde{\epsilon}_l \leq \epsilon_l$ hold.

\begin{algorithm}
\caption{Approximating $\tilde{\epsilon}$ with PCHIP}\label{alg:approximateep}
\begin{flushleft}
\textbf{Input}: Quantile labels $\mathbf{q}$, quantiles $\mathbf{Q^{X_t}}$ and $\mathbf{Q^{Y_t}}$; \\
\textbf{Output}: Approximated $\varepsilon_l$ and $\varepsilon_u$
\end{flushleft}
\begin{algorithmic}[1]
    \State {$\mathbf{\varepsilon_u} \gets \text{[ ]}$}
    \State {$\mathbf{\varepsilon_l} \gets \text{[ ]}$}
    \For {$t = 0, 0.001, 0.002, ..., 0.999, 1.000\}$}
        \State{$\tilde{F}_X \gets \text{PCHIP}(\mathbf{Q^{X_t}},\mathbf{q})$}
        \State{$\tilde{F}_Y \gets \text{PCHIP}(\mathbf{Q^{Y_t}},\mathbf{q})$}
        \For{$i = 1,2,...,|\mathbf{q}|$}
            \If{$\tilde{F}_X(i)$ > $\tilde{F}_X(i)$}
            \State{$\tilde{F}_Y,\tilde{F}_X$ = $\tilde{F}_X,\tilde{F}_Y$}
            \EndIf
            \State{$\mathbf{\tilde{\varepsilon}}.\text{append}(\tilde{F}_Y(i) - \tilde{F}_X(i))$}
        \EndFor
    \EndFor
    \State{$\varepsilon_l \gets \text{max}\{\mathbf{\tilde{\varepsilon}}\}$}
    \State{$\varepsilon_u \gets -\text{min}\{\mathbf{\tilde{\varepsilon}}\}$}
\end{algorithmic}
\end{algorithm}

\subsection{Approximating $Z$}
The approach of using interpolated quantiles to approximate $\tilde{\epsilon}_l$ and $\tilde{\epsilon}_u$ can be used to approximate the bounds for $Z = x_i - y_j$ for $\forall x_i \match y_j$ as well. Again, in scenarios where the CDF for $X$ and $Y$ are ``smooth'', we can use the PCHIP interpolator to obtain a modified version of algorithm \ref{alg:ISMB}. For \textit{zero-violation models}, recall $Z^L \to Z^U = Z$ as the number of quantiles approaches infinity uniformly over the range $[0,1]$. Thus using PCHIP to approximate $Z^U_t$ and $Z^L_t$ would result in $Z^L_t = Z^U_t$, as reflected in the bottom right subplot of \ref{fig:Z_different_e}. The implementation details are given in algorithm \ref{alg:AIMB}. To approximate a \textit{zero-violation model}, all that's needed is to apply algorithm \ref{alg:AIMB} with $\epsilon_u = \epsilon_l = 0$.

\begin{algorithm}
\begin{flushleft}
\textbf{Input}: Quantile labels $\mathbf{q}$, quantiles $\mathbf{Q^{X}}$ and $\mathbf{Q^{Y}}$\\
\textbf{Output}: Discrete Random Variables $Z^U$ and $Z^L$
\end{flushleft}
\caption{Approximated Iterated Model Bound}\label{alg:AIMB}
    \begin{algorithmic}[1]
        \State{$\mathbf{Z_t^U} \gets [\text{ } ];\text{ } \mathbf{Z^L_t} \gets [\text{ }]$}
        \State{$\tilde{Q}^{X} \gets \text{PCHIP}^{-1}(\mathbf{q},\mathbf{Q^{X}})$}
        \State{$\tilde{Q}^{Y} \gets \text{PCHIP}^{-1}(\mathbf{q},\mathbf{Q^{Y}})$}
        \For{$k = 1,2,...,10^5$}
            \State{Sample $\frac{i}{N}$ from Uniform$(0,1)$}
            \State{$q_l \gets \frac{i}{N} - \epsilon_l ;\text{ } q_u \gets \frac{i}{N} + \epsilon_u$}
            \State{Append $\tilde{Q}^{X}(q_u) - \tilde{Q}^{Y}(q_l)$ to $\mathbf{Z^U_t}$}
            \State{Append $\tilde{Q}^{X}(q_l) - \tilde{Q}^{Y}(q_u)$ to $\mathbf{Z^L_t}$}
        \EndFor
    \Return {$\mathbf{Z^U_t}, \mathbf{Z^U_t}$}
    \end{algorithmic}
\end{algorithm}


\section{Experiments}
We evaluate our proposed methods on the multi-model prediction results for round 9 and round 11, where round 9 has $t_{\text{app}} > t_0$ and round has $t_{\text{app}} = t_0$; that is, the divergence of scenarios in round 9 is delayed while round 11 is immediate, and each increment in the timestamp $t$ represents a week. We first use round 9 to demonstrate the estimation and approximation of $\epsilon$. As it is only possible to estimate $\epsilon$ for a scenario with $t > t_{\text{app}}$. Then, we devise a reasonable guess of $\epsilon$ and evaluate the difference of scenarios with that $\epsilon$.

\subsection{Estimation and Approximation of $\epsilon$}

    \begin{tabular}{c|c|c|c|c}
    Model & $\tilde{\epsilon}_l$ & $\tilde{\epsilon}_u$ & $\varepsilon_l$ & $\varepsilon_u$ \\
    \hline
    USC SIkJalpha \cite{sikjalpha}& \textbf{0.05} & \textbf{0.1} & 0.013 & 0.013 \\
    Ensemble & \textbf{0.05} & \textbf{0.1} & 0.038 & 0.037 \\
    Ensemble LOP & \textbf{0.05} & \textbf{0.1} & 0.025 & 0.025 \\
    Ensemble LOP Untrimmed & \textbf{0.05} & \textbf{0.1} & 0.02 & 0.025 \\
    JHUAPL Bucky \cite{bucky}& \textbf{0.05} & \textbf{0.05} & 0.00 & 0.00 \\
    MOBS NEU-GLEAM COVID \cite{mobs}& \textbf{0.05} & \textbf{0.1} & 0.05 & 0.05 \\
    \end{tabular}
    \makeatletter\def\@captype{table}\makeatother
    \caption{Summary of $\epsilon$ for round 9 scenario `B' and `A' cumulative cases. For each row, the largest of $\tilde{\epsilon}$ and $\varepsilon$ are marked in bold}\label{table:ep}

In round 9 $t_{\text{app}} = 4$. In other words, with prediction starting at $t_0 = 0$, there is a total of 4 weeks' data available before $t_{\text{app}}$ for obtaining $\epsilon$. We demonstrate the estimation and approximation of both $\epsilon_u$ and $\epsilon_l$ for a list of models participating in the multi-model COVID-19 prediction effort. In figure \ref{fig:epsilon_models}, we choose scenarios `A' and `B' as the scenarios of interest; namely, scenario `A' refers to adopting childhood vaccination and no new COVID-19 variant, while scenario `B' refers to no childhood vaccination and no new COVID-19 variant, all of which are for cases cumulative data. We observe from the left subplot that the estimated epsilon violation $\tilde{\epsilon}_l$ and $\tilde{\epsilon}_u$ is small and clusters for all the models being examined. For the approximations $\varepsilon_l$ and $\epsilon_u$, a linear relationship is observed between the two values for any given \textit{model}. The numerical results are given in table \ref{table:ep}. Notice that for all of the \textit{models} examined, $\varepsilon \leq \tilde{\epsilon}$ as expected.

\begin{figure}
    \centering
    \includegraphics[width=1\linewidth]{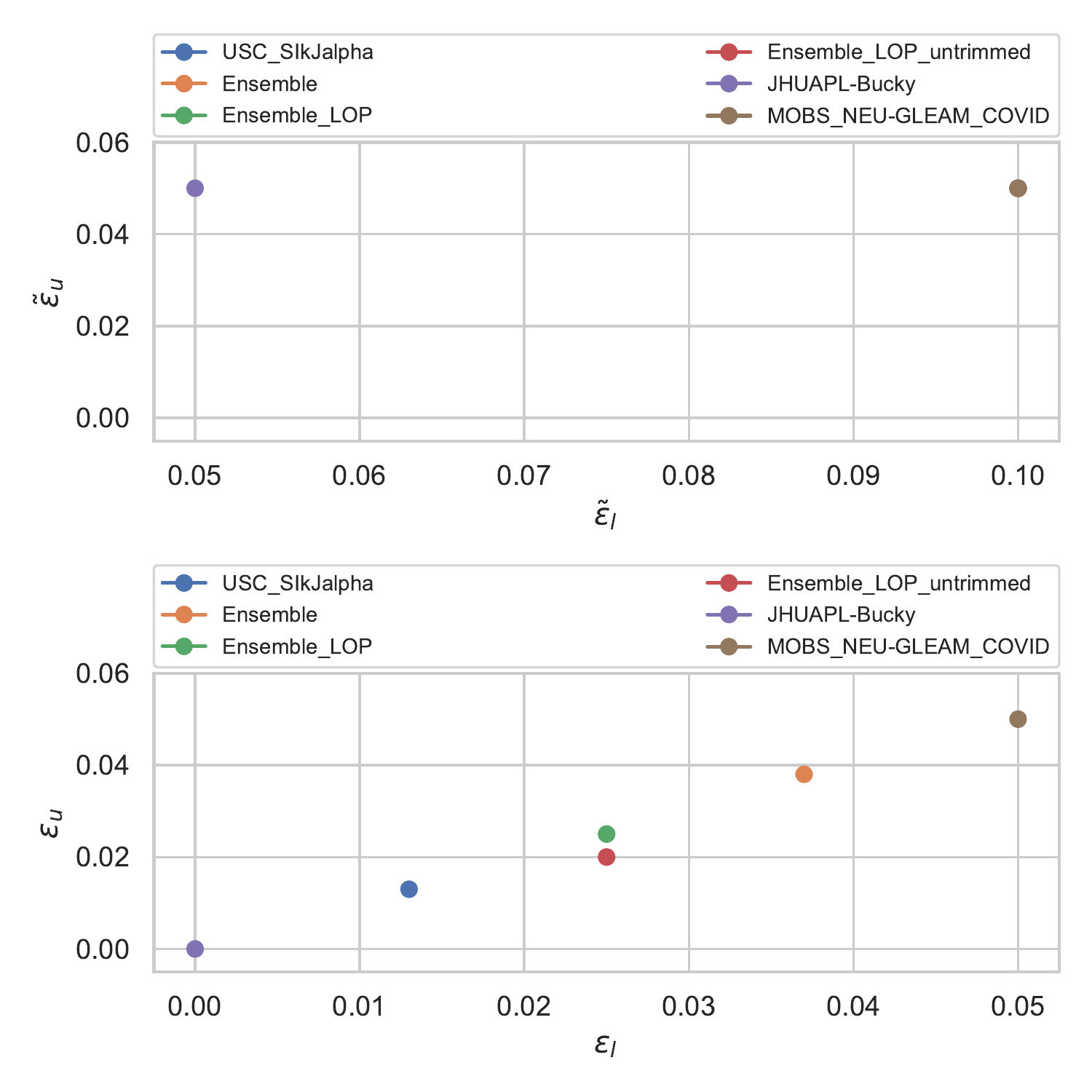}
    \caption{$\tilde{\epsilon}$ and $\varepsilon$ plots for various models for round 9 scenarios `A' and `B. For the top subplot, $\tilde{\epsilon}_l$ and $\tilde{\epsilon}_u$ are clustered at two coordinates on the graph. This is understandable as $\tilde{\epsilon}_u$ and $\tilde{\epsilon}_l$ are always greater or equal to the smaller difference in quantiles; that is, for models with violations smaller than the smallest difference in quantiles, the estimated violation gets rounded up. For approximated $\varepsilon$, the bottom subplot, clustering is not observed as $\varepsilon$ are calculated from interpolated quantiles where the smallest difference in the quantiles is close to 0.}
    \label{fig:epsilon_models}
\end{figure}

\subsubsection{Distribution of $\tilde{\epsilon}$ and $\varepsilon$}
Since we obtain $\tilde{\epsilon}$ and $\varepsilon$ via maximization on $t \leq t_{\text{app}}$, it is worthwhile to examine the distribution of both $\tilde{\epsilon}$ and $\varepsilon$ versus $t$ to make sure the maximums obtained are not outliers. Again, we use the same round and environment settings as before. From figure \ref{fig:e_dist}, the value of $\tilde{\epsilon}_l$ stays constant in the $4$ weeks that are examined, and although slight variations in $\tilde{\epsilon}_u$ is present, no significant outlier exist to have a significant impact on the overall $\tilde{\epsilon}_u$.

\begin{figure}
    \centering
    \includegraphics[width=1\linewidth]{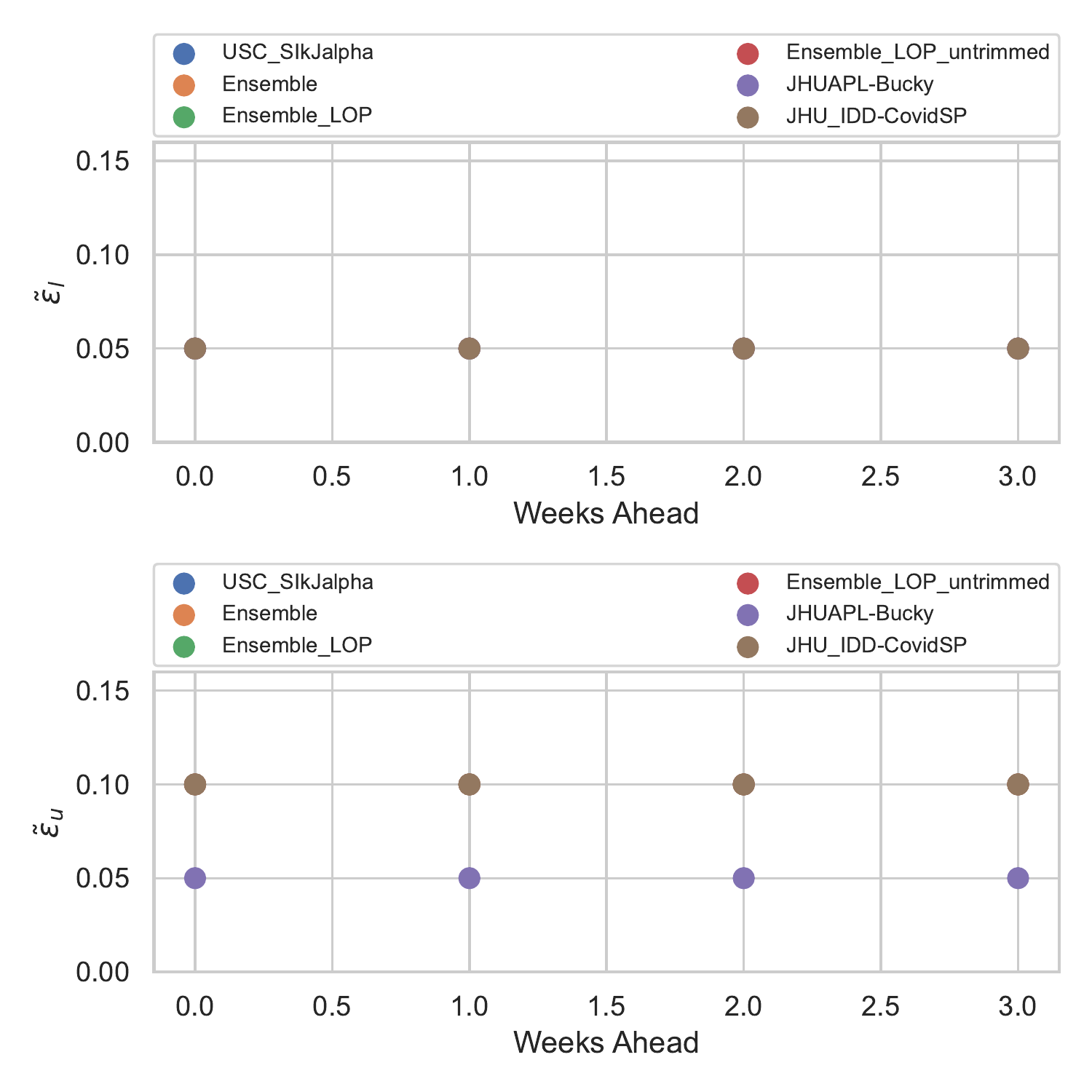}
    \caption{Distribution of $\tilde{\epsilon}$ over time for round 9 cases cumulative scenarios `A' and `B'. As observed, since both $\tilde{\epsilon}$ and $\varepsilon$ are maximized over time, the estimates for $\epsilon$ are reasonable as no significant variation is observed over time.}
    \label{fig:e_dist}
\end{figure}

For the distribution of $\varepsilon$ in figure \ref{fig:vare_dist}, the clustering effect is significantly reduced as compared to the non-interpolated estimate of $\epsilon$ in figure \ref{fig:e_dist}. From observation, although $\varepsilon$ has a higher variance across weeks for any particular model, the overall distribution remains fairly uniform and no significant outlier exists for all of the models examined.

\begin{figure}
    \centering
    \includegraphics[width=1\linewidth]{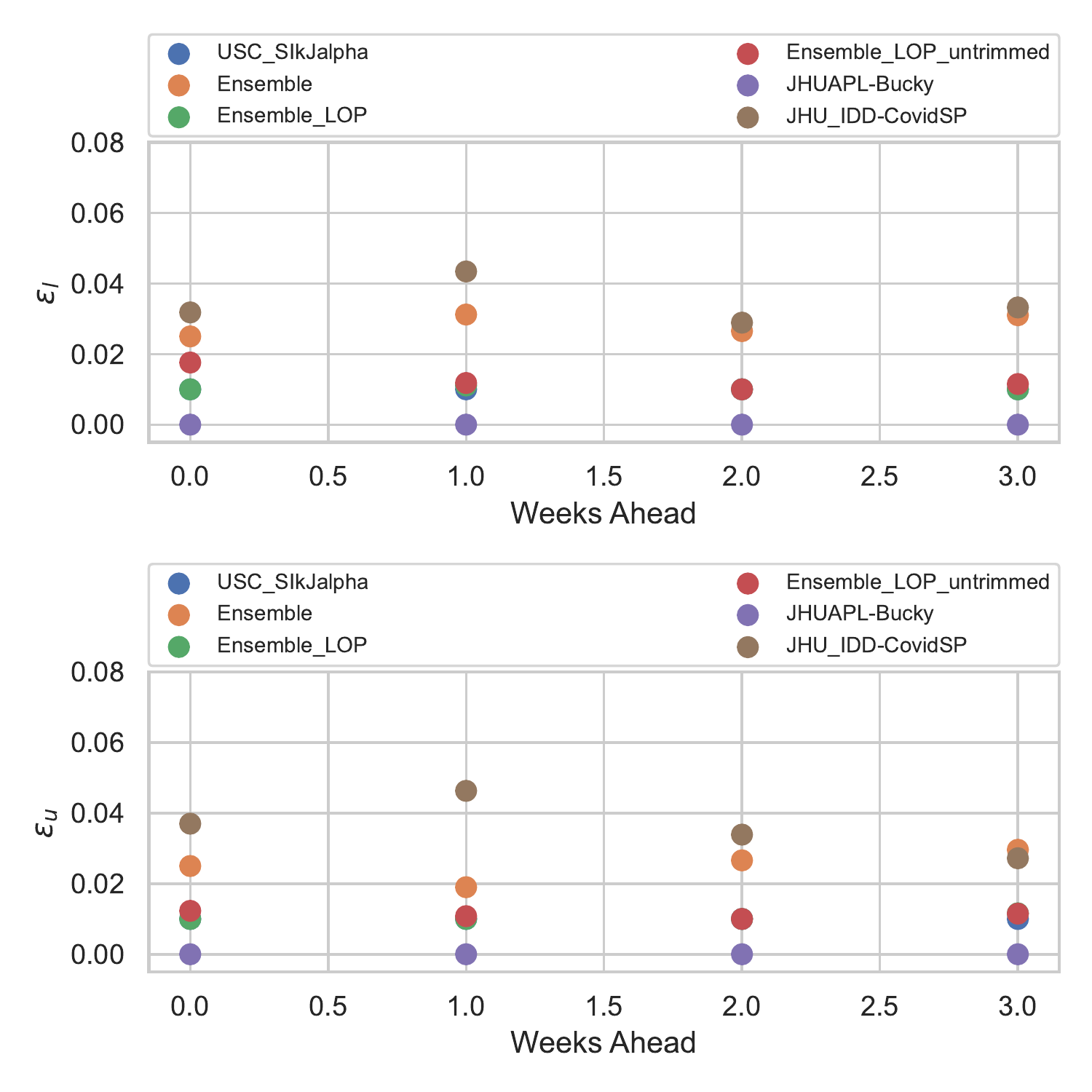}
    \caption{Distribution of $\varepsilon$ over time for round 9 cases cumulative scenarios `A' and `B'. Comparing to the $\tilde{\epsilon}$ in figure \ref{fig:e_dist}, $\varepsilon$ are not lower-bounded by the smallest difference in quantiles. Therefore, as expected, more variation is observed across time.}    \label{fig:vare_dist}
\end{figure}


\subsection{Confidence Intervals of Z}
After obtaining an estimated $\epsilon$ and examining its distribution in detail, we now attempt to use algorithm \ref{alg:ISMB} and algorithm \ref{alg:AIMB} to bound the difference of cumulative cases for round 11 scenario `B' $-$ `A'. Specifically, scenario `B' refers to ``Optimistic severity and high transmissibility increase", while `A' refers to ``Optimistic severity and low transmissibility increase", with both scenarios starting at 2021-12-21. Unlike round 9, round 11 has $t_0 = t_{\text{app}}$, and we therefore cannot estimate nor approximate $\epsilon$ with the algorithm proposed. Yet, we should be able to make reasonable speculation of $\epsilon$ from the previous plots of $\epsilon$ distributions. Instead of taking a conclusive guess, we demonstrate the effect of different $\alpha$ on the resulting $\alpha$-confidence interval. In particular, we use $\alpha = 0.8$. The result from both exact (algorithm \ref{alg:ISMB}) $Z$ and estimated (algorithm \ref{alg:AIMB}) $\alpha$-confidence interval for $Z$ are plotted side by side in figure \ref{fig:Z_different_e}.

\begin{figure}
    \centering
    \includegraphics[width=1\linewidth]{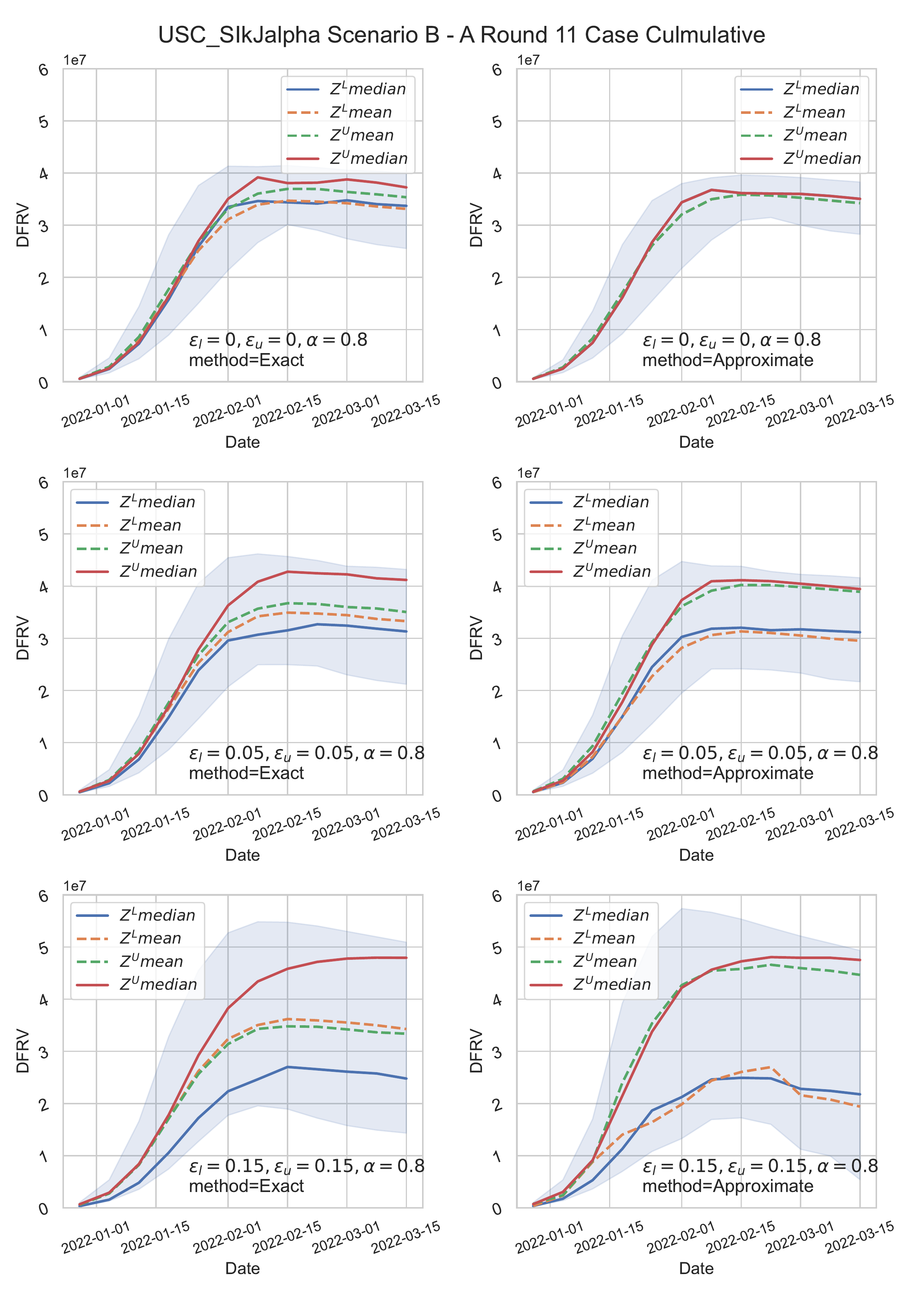}
    \caption{0.8-confidence interval for scenario B - A cases cumulative round 11 for the model USC-SikJ$\alpha$. The left column of subplots results from algorithm \ref{alg:ISMB} and the right column are the approximated bounds via algorithm \ref{alg:AIMB}. This plot demonstrates the effects of varying levels of violation have on the resulting bound.}
    \label{fig:Z_different_e}
\end{figure}

Intuitively, the effect of simultaneously increasing $\epsilon_u$ and $\epsilon_l$ has a two-sided effect on the $\alpha$-confidence interval obtained for some fixed alpha. When $\epsilon_l = \epsilon_u = 0$ for the approximated method (i.e interpolating the quantiles), it has been previously shown that $Z^U$ converges to $Z^L$; in other words, the medians and mean of both random variables collide. This can be seen from the bottom left subplot of figure \ref{fig:Z_different_e}. 

From the plots, we are provided with a way to interpret the results generated by the \textit{model} ``USC SIKJalpha": With at least 0.8 probability, the difference in the projections of highly transmissible and low transmissible variants of COVID-19 is going to fall within the blue region ascribed in figure \ref{fig:Z_different_e}.

\section{Conclusion}
The goal of generating scenario projections based on a future decision is to assess the impact of that decision. In practice, however, as the joint distribution of decisions is unknown, obtaining tight bounds from probabilistic projections is often infeasible without significant changes to the model. Without doing additional work on the model, the methods proposed can find arbitrary $\alpha$-confidence intervals for scenario differences, under some assumptions. The second half of the paper is dedicated to relaxing the assumptions by incorporating a quantitative measure $\epsilon$ on the degree of violation of the assumption. Finally, a method is proposed to reliably approximate the difference in the probabilistic projections of scenarios.  

\section*{Acknowledgement}
This work was supported by the  Centers for Disease Control and Prevention and the National Science Foundation under the awards no. 2135784 and 2223933. Any opinions, findings, and conclusions or recommendations expressed in this material are those of the author and do not necessarily reflect the views of the National Science Foundation or the Center for Disease Control and Prevention.

\bibliographystyle{plainnat}
\bibliography{references}

\appendix
\section{Appendix}
The code, data, and instructions to reproduce the results and visualizations associated with this work can be found in \href{https://github.com/ULY-SS3S/Scenario_Interpretation}{Github}\footnote{\url{https://github.com/ULY-SS3S/Scenario_Interpretation}}. 

\begin{proof}\label{lem:1proof} (Lemma \ref{lem:1})
Consider the \textit{matching} $x_i \match y_j$. Without loss of generality, assume $i>j$, then since each $x$ and $y$ has a one-to-one mapping, there must exist an $i' < i$ and $j' > j$ such that $x_{i'}$ is matched with $y_{j'}$. Otherwise, by the Pigeon-Hole principle, more than one $x$ would need to be matched with one $y$, contradicting the one-to-one mapping of $x$ and $y$.

Consider the changes in latent variables that changed the pairs $(x_i,y_j)$ to $(x_{i'},y_{j'})$. This change decreased $x_i$ to $x_{i'}$ while increased $y_j$ to $y_{j'}$, violating assumption \ref{Assumption1}.
\end{proof}

\begin{proof}\label{lem:2proof}  (Lemma \ref{lem:2})
We abuse the notation for $\min$ and $\max$ to be taken on $t$. This means to take the extremum on for $x_i \match y_j$ on $U(t) \sforall t$.
Since the \textit{model} is \textit{well-behaved}, for $x_i \match y_j$ we have 
\begin{align}
    \epsilon_u &\vcentcolon= -\min_{i,j,t}\{ F_Y(y_j) - F_X(x_i)\} \\
    &= -\min_{i,j,t < t_{\text{app}}}\{ F_Y(y_j) - F_X(x_i)\} \label{eqn:16}
    \end{align}
since $F(x_i)$ is monotonic in $x_i$,
\begin{equation}
F_X(Q_{q_\beta}) \geq F_X(x_i), \text{ where } \beta = \min_{k\geq i}\{Q^Y_{q_k} \geq x_i\}
\end{equation}
it thus follows from \ref{eqn:16} that
\begin{align}
    \epsilon_u &\leq -\min_{j,t<t_{\text{app}}}\{F_Y(y_j) - F_X(Q_{q_\beta})\} ,\text{ where } \beta = \min_{k\geq i}\{Q^Y_{q_k} \geq x_i\}\\
    &= -\min_{j,t<t_\text{app}}\{F_Y(y_j) - q_{\beta}\}, \text{ where } \beta = \min_{k\geq i}\{Q^Y_{q_k} \geq x_i\} \label{eqn:19}
\end{align}
Since $\forall x_i, y_j \text{ }\exists q_{i-1},q_{i}$ such that
\begin{equation}
    q_{i-1} \leq F_Y(y_j) \leq q_{i}
\end{equation}
That is, from lemma \ref{lem:1}, $Q^Y_{q_{i-1}} \leq y_j = x_i \leq Q^Y_{q_i} \leq Q^X_{q_i}$. Then, from \ref{eqn:19} we have
\begin{align}
    \epsilon_u &\leq -\min_{i-1,t<t_{\text{app}}}\{q_{i-1} - q_\beta\}, \text{ where } \beta = \min_{k\geq i}\{Q_k^Y \geq Q_{q_i}^X\}\\
    &= -\min_{i,t<t_{\text{app}}}\{q_{i} - q_\beta\}, \text{ where } \beta = \min_{k\geq i + 1}\{Q^Y_k \geq Q^X_{i+1}\}\\
    &= \tilde{\epsilon}_u
\end{align}
as desired. Similarly, 
\begin{align}
    \epsilon_l &\vcentcolon= \max_{i,j,t}\{ F_Y(y_j) - F_X(x_i)\} \\
    &= \max_{i,j,t < t_{\text{app}}}\{ F_Y(y_j) - F_X(x_i)\} \label{eqn:26}
\end{align}
since $F(x_i)$ is monotonic in $x_i$, 
\begin{align}
    \epsilon_u &\leq \max_{j,t\leq t_{\text{app}}}\{F_Y(y_j) - q_\alpha\} ,\text{ where } \alpha = \max_{k\leq i-1}\{Q^X_{q_k} \geq x_i\}\\
    \intertext{applying lemma \ref{lem:1},}
    &= \max_{j,t\leq t_\text{app}}\{F_Y(y_j) - q_{\alpha}\}, \text{ where } \alpha = \max_{k \leq i-1}\{Q^X_{q_k} \geq y_j\}\label{eqn:27}
\end{align}
Since $\forall x_i, y_j \text{ }\exists q_{i-1},q_{i}$ such that
\begin{equation}
    q_{i-1} \leq F_Y(y_j) \leq q_{i}
\end{equation}
that is, from the assumption $Q^X_i \geq Q^Y_i$,
\begin{equation}
    Q_i^Y \geq y_j = x_i \geq Q_{q{i-1}}^Y \geq Q_{q_{i-1}}^X
\end{equation}
then, from \ref{eqn:27} we have
\begin{align}
    \epsilon_u &\leq -\min_{j,t\leq t_{\text{app}}}\{q_{i} - q_\alpha\}, \text{ where } \alpha = \min_{k \leq i-1}\{Q_k^Y \geq Q_{j-1}^X\} \\
    &= \tilde{\epsilon}_u   
\end{align}
as desired.
\end{proof}
\end{document}